\documentclass[preprint,10pt]{elsarticle_mod}
\usepackage{url}
\usepackage{amsmath}
\usepackage{amssymb}
\usepackage{amsthm}
\usepackage{graphics}
\usepackage{graphicx}
\usepackage{times}
\usepackage{algorithm,algorithmic}

\newcommand{\E}[1]{\mathbf{E}\left[#1\right]}

\newcommand{\var}[1]{\mathbf{Var}\left[#1\right]}
\newcommand\bigforall{\mbox{\Large $\mathsurround0pt\forall$}}

\newtheorem{theorem}{Theorem}

\newtheorem{Assumption}[theorem]{Assumption}

\newtheorem{Lemma}[theorem]{Lemma}

\urldef{\mailsa}\path|rafal.kapelko@pwr.edu.pl|

\begin{document}
\begin{frontmatter}
\title{On the expected moments between two identical random processes with application to sensor network
}
\author[pwr]{Rafa\l{} Kapelko\corref{cor1}}
\ead{rafal.kapelko@pwr.edu.pl}
\fntext[pwrfootnote]{Research supported by grant nr 0401/0086/16}
\cortext[cor1]{Corresponding author at: Department of Computer Science,
Faculty of Fundamental Problems of Technology, Wroc{\l}aw University of Science and Technology, 
 Wybrze\.{z}e Wyspia\'{n}skiego 27, 50-370 Wroc\l{}aw, Poland. Tel.: +48 71 320 33 62; fax: +48 71 320 07 51.}
\address[pwr]{ Department of Computer Science,
Faculty of Fundamental Problems of Technology, Wroc{\l}aw University of Science and Technology, Poland}
\begin{abstract}
We give a closed analytical formula for expected absolute difference to the power $a$ between two identical general random processes, when $a$ is an even positive number.
The following identity is valid
$$\E{|X_k-Y_k|^a}=\frac{a!\left(\var{\xi_1}\right)^{\frac{a}{2}}}{\left(\frac{a}{2}\right)!}\frac{k^{\frac{a}{2}}}{\lambda^a}+\frac{O\left(k^{\frac{a}{2}-1}\right)}{\lambda^a}$$
(see Theorem \ref{thm:mainclosedbe}).

As an application to sensor network we prove that the  optimal transportation cost to the power $b>0$ of the maximal random bicolored matching
with edges $\{X_k,Y_k\}$ is in
$\frac{\Theta\left(n^{\frac{b}{2}+1}\right)}{{\lambda}^b}$
when $b \ge 2,$ and in
$\frac{O\left(n^{\frac{b}{2}+1}\right)}{{\lambda}^b}$ when $0< b < 2.$ 
\end{abstract}
\begin{keyword}
Random process, Moment distance, Sensor movement, Matching
\end{keyword}
\end{frontmatter}

The aim of the note is to study the problem of the expected absolute difference to a power $b>0$ between two identical general random processes. 
We define general random process as follows.
\begin{Assumption}[general random process]
\label{assumption:first}
Fix $b>0.$ Let $c$ be the smallest even integer  greater than or equal to b.
Consider two identical and independent sequences $\left\{\xi_i\right\}_{i\ge 1},$ $\left\{\tau_i\right\}_{i\ge 1}$  of  identically distributed positive, absolutely continuous random variables.
Assume that
\begin{equation}
\label{eq:motiv01} 
\E{\xi_i}=\E{\tau_i}=1,
\end{equation}
\begin{equation}
\label{eq:motiv02}
\E{\xi^p_i}=\E{\tau^p_i}\le C_c,\,\,\,\,\,\,p\in\{2,3,\dots, c\} 
\end{equation}
for some constans $C_c$ independent on $\lambda>0,$
\begin{align}
\label{eq:motiv03}
\nonumber &\bigforall_{
p_i\in\mathbf{N},\,\,\,
2\le p_1+p_2+\dots +p_j\le c
}\\
\nonumber &\,\,\,\,\,\,\,\,\,\,\,\, \Big(\E{\xi^{p_1}_{i_1}\xi^{p_2}_{i_2}\dots\xi^{p_j}_{i_j}}=\E{\xi^{p_1}_{i_1}}\E{\xi^{p_2}_{i_2}}\dots\E{\xi^{p_j}_{i_j}},\\
&\,\,\,\,\,\,\,\,\,\,\,\, \E{\tau^{p_1}_{i_1}\tau^{p_2}_{i_2}\dots\tau^{p_j}_{i_j}}=\E{\tau^{p_1}_{i_1}}\E{\tau^{p_2}_{i_2}}\dots\E{\tau^{p_j}_{i_j}}\Big).
\end{align}
Let $X_k=\frac{1}{\lambda}\sum_{i=1}^k\xi_i,$ $Y_k=\frac{1}{\lambda}\sum_{i=1}^k\tau_i.$
\end{Assumption}
We are interested in the moments (of each $b>0$)
$$
\E{|X_{k+r}-Y_k|^b},\,\,\,\text{when}\,\,\,k\ge 1, r\ge 0, \text{(see Assumption \ref{assumption:first})}.
$$
More importantly, our work is related to the paper \cite{dam_2014} where the author studied the 
expected absolute difference of the arrival times between two identitcal and independent Poisson processes
with respective arrival times $P_1,P_2,\dots$ and $Q_1,Q_2,\dots$ on a line and derived a closed form formula for the 
\begin{equation}
\label{eq:kranakis}
\E{|P_{k+r}-Q_k|},\,\,\,\text{for any}\,\,\,k\ge 1, r\ge 0.
\end{equation}
The paper \cite{dam_2014} treats only the very special case when $P_k,$ $Q_k$ obeys the gamma distribution with parameters $k, \lambda.$

The following open problem was proposed in \cite{dam_2014} to study the moments
$$\E{|P_{k+r}-Q_k|^b},$$ 
where $b>0$ is fixed for more general random processes. 

We extend the work in \cite{dam_2014} by considering the expected absolute difference
to all exponents $b>0$ between two identical and independent general random processes
and thus solve the open problem. The main advantage of our approach is to derive closed form asymptotic formulas for the moments without use of any specific density function (gamma distribution) 
for a wide class of distributions.

As another point of motivation for studying these expected absolute difference to exponent arise in sensor networks. 
We consider two sequences $\left\{X_i\right\}_{i=1}^{n},$ $\left\{Y_i\right\}_{i=1}^{n}$ (see Asumption (\ref{assumption:first})). The sensors in 
$X_1,X_2,\dots ,X_n$ are colored black and the sensors in $Y_1,Y_2,\dots,Y_n$ are colored white.
We are interested in expected minimum sum of length to exponent of a maximal bicolored matching 
(the vertices of each matching edge have different colors).

The cost of sensor movement has been studied extensively in the research community (e.g., see \cite{ajtai_84, kumar2005, sajal2008, cortes2012, talagrand_2014, siamcontrol2015, kapelkokranakisIPL, KK_2016_cube}).
The book \cite{talagrand_2014}  addresses the matching theorems for $N$ random variables independently uniformly distributed in the $d-$dimensional unit cube
$[0,1]^d,$ where $d\ge 2.$ The authors of \cite{kapelkokranakisIPL} deal with covering of
the unit interval with uniformly and independently at random placed sensors
when the cost of movement of sensors is proportional to some (fixed) power $a>0.$
This paper \cite{siamcontrol2015} regards the problem of optimally placing unreliable sensors in a one-dimensional environment.

Closely related to our work is the paper \cite{kapelkogamma} where the author considers two  identical and independent Poisson processes with arrival rate $\lambda>0$ 
and respective arrival times $X_1,X_2,\dots$ and $Y_1,Y_2,\dots$ on a line
and gives a closed analytical formula for the $\E{|X_{k+r}-Y_k|^a}, $ for any integer $k\ge 1, r\ge 0$ and $a\ge 1.$
\section{Main results}
\label{tight:sec}
Fix $b>0.$ 
In this section the expected absolute difference to the power $b$ between two identical and independent  general random processes is analyzed.

Firstly, we derive closed form formula for expected absolute difference to the power $a$ between two identical general random processes, when $a$ is an even positive integer.
We proof Theorem \ref{thm:mainclosedbe}.
\begin{theorem} 
\label{thm:mainclosedbe} 
Let us fix an even positive integer $a.$
Let Assumption \ref{assumption:first} hold for $b:=a$ and let $k>\frac{a}{2}.$ 
Then the following identity is valid
$$
\E{|X_k-Y_k|^a}=\frac{a!\left(\var{\xi_1}\right)^{\frac{a}{2}}}{\left(\frac{a}{2}\right)!}\frac{k^{\frac{a}{2}}}{\lambda^a}+\frac{O\left(k^{\frac{a}{2}-1}\right)}{\lambda^a}.$$
\end{theorem}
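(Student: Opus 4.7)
Since $a$ is even, $|X_k-Y_k|^a=(X_k-Y_k)^a$. The plan is to set $W_i:=\xi_i-\tau_i$, so that
$$X_k-Y_k=\frac{1}{\lambda}\sum_{i=1}^{k} W_i,$$
and then carry out a multinomial expansion whose expectation is controlled by the limited independence built into Assumption \ref{assumption:first}. The conditions $\E{\xi_i}=\E{\tau_i}=1$ together with the independence of the two sequences give $\E{W_i}=0$ and $\E{W_i^2}=2\var{\xi_1}$. For any distinct indices $j_1,\dots,j_m\in\{1,\dots,k\}$ and exponents $p_1,\dots,p_m\ge 1$ with $p_1+\cdots+p_m\le a$, expanding each $W_{j_l}^{p_l}=(\xi_{j_l}-\tau_{j_l})^{p_l}$ by the binomial theorem, invoking the independence of $\{\xi_i\}$ and $\{\tau_i\}$, and applying (\ref{eq:motiv03}) within each sequence (the total $\xi$- or $\tau$-degree never exceeds $a=c$) together give
$$\E{W_{j_1}^{p_1}\cdots W_{j_m}^{p_m}}=\prod_{l=1}^{m}\E{W_{j_l}^{p_l}}.$$
Assumption (\ref{eq:motiv02}) further yields a uniform bound $|\E{W_i^p}|\le M$ for $p\le a$, with $M$ independent of $\lambda$ and $k$.

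Next I expand
$$\left(\sum_{i=1}^{k}W_i\right)^{\!a}=\sum_{(i_1,\dots,i_a)\in\{1,\dots,k\}^a} W_{i_1}\cdots W_{i_a}$$
and group the $a$-tuples by their unordered multiset of multiplicities $(p_1,\dots,p_m)$ with $\sum_l p_l=a$ and $p_l\ge 1$. The factorization from the previous step turns each contribution into $\prod_l\E{W_{j_l}^{p_l}}$, which vanishes whenever some $p_l=1$. So only partitions with all parts $\ge 2$ contribute; these force $m\le a/2$, with equality precisely when every $p_l=2$.

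To extract the leading order, observe that the number of ordered $a$-tuples in which every distinct index appears exactly twice is $\binom{k}{a/2}\cdot\frac{a!}{2^{a/2}}$ (choose $a/2$ distinct indices from $\{1,\dots,k\}$, then permute the $a$ positions into $a/2$ unordered pairs). Their contribution equals
$$\frac{a!}{2^{a/2}(a/2)!}\cdot\frac{k!}{(k-a/2)!}\cdot (2\var{\xi_1})^{a/2}=\frac{a!(\var{\xi_1})^{a/2}}{(a/2)!}\,k^{a/2}+O(k^{a/2-1}),$$
which makes sense because $k>a/2$. Every other surviving partition has $m\le a/2-1$ parts, so the number of corresponding tuples is $O(k^m)=O(k^{a/2-1})$, each expectation is at most $M^m$, and summing over the finitely many such partitions leaves an $O(k^{a/2-1})$ remainder. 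Dividing by $\lambda^a$ yields the claim. The main technical step is the factorization $\E{W_{j_1}^{p_1}\cdots W_{j_m}^{p_m}}=\prod_l\E{W_{j_l}^{p_l}}$: this is the only place where (\ref{eq:motiv03}) is used, and the restriction $p_1+\cdots+p_m\le c=a$ is precisely what is required so that the moment assumptions cover every configuration produced by the multinomial expansion of degree $a$.
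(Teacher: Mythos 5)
Your proof is correct and follows essentially the same route as the paper: a multinomial expansion of $\E{(X_k-Y_k)^a}$, factorization of the mixed moments via the limited-independence hypothesis (\ref{eq:motiv03}), identification of the dominant terms as those in which every index occurs with multiplicity exactly $2$ (contributing $\binom{k}{a/2}\,a!\left(\var{\xi_1}\right)^{a/2}$), and an $O\left(k^{a/2-1}\right)$ counting bound on the remaining configurations. The only noteworthy difference is that you discard the non-leading terms using just $\E{\xi_i-\tau_i}=0$ and the observation that any multiplicity pattern with all parts $\ge 2$ but not all equal to $2$ has at most $a/2-1$ distinct indices, whereas the paper first proves the stronger cancellation $\E{(\xi_i-\tau_i)^{2d+1}}=0$ for all odd powers; both yield the same error bound.
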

The general strategy of our combinatorial proof of Theorem \ref{thm:mainclosedbe} is the following.
Applying multinomial theorem we write $\E{|X_k-Y_k|^a}$ as the sum 
(see Equation (\ref{align:first})). Next, we make an important observation that $\mathbf{E}\left[(\xi_i-\tau_i)^{2d+1}\right]=0$
(see Equation (\ref{align:second})). Using this, we rewrite $\E{|X_k-Y_k|^a}$ as the sums (\ref{align:3}) and (\ref{align:5}).
Finally, the asymptotic depends on the expression given by the first sum (see Equation (\ref{align:3})), while the second sum (see Equation (\ref{align:5})) is negligible.
\begin{proof}
Fix an even positive integer $a.$ Assume that $k>\frac{a}{2}.$ 

Firstly, combining together multinomial theorem, Equation (\ref{eq:motiv03}) as well as Assumption \ref{assumption:first} we deduce that 
\begin{align}
\label{align:first}
&\nonumber\E{|X_k-Y_k|^a}=\E{(X_k-Y_k)^a}\\
&\,\,\,\,\,\,=\sum_{S}
\nonumber\frac{a!}{(l_1)!(l_2)!\dots(l_k)!}\frac{1}{\lambda^a}\E{\prod_{i=1}^k\left(\xi_i-\tau_i\right)^{l_i}}\\
&\,\,\,\,\,\,=\sum_{S}
\frac{a!}{(l_1)!(l_2)!\dots(l_k)!}\frac{1}{\lambda^a}\prod_{i=1}^k\E{\left(\xi_i-\tau_i\right)^{l_i}},
\end{align}
where 
$$S=\{(l_1,l_2,\dots l_k)\in\mathbb{N}^k: l_1+l_2+\dots+l_k=a\}.$$
Let $d$ be natural number. Using Assumption \ref{assumption:first} and the basic binomial identity\\ 
$\binom{2d+1}{j}(-1)^{2d+1-j}=-\binom{2d+1}{2d+1-j}(-1)^j$ we have
\begin{align}
\label{align:second}
\nonumber\mathbf{E}&\left[(\xi_i-\tau_i)^{2d+1}\right]\\
\nonumber &=\sum_{j=0}^{2d+1}\binom{2d+1}{j}\E{\xi^j_i}(-1)^{2d+1-j}\E{\tau^{2d+1-j}_i}\\
\nonumber &=\sum_{j=0}^{2d+1}\binom{2d+1}{j}\E{\tau^j_i}(-1)^{2d+1-j}\E{\tau^{2d+1-j}_i}\\
\nonumber &=\sum_{j=0}^{d}\E{\tau^j_i}\E{\tau^{2d+1-j}_i} \binom{2d+1}{j}(-1)^{2d+1-j}\\
\nonumber &+\sum_{j=0}^{d}\E{\tau^j_i}\E{\tau^{2d+1-j}_i}\binom{2d+1}{2d+1-j}(-1)^{j}\\
&=0
\end{align}
Combining together (\ref{align:first}) and (\ref{align:second}) we deduce that
\begin{equation}
\label{equation:first}
\E{|X_k-Y_k|^a}=
\sum_{S_1
}
\frac{a!}{(l_1)!(l_2)!\dots(l_k)!}\frac{1}{\lambda^a}\prod_{i=1}^k\E{\left(\xi_i-\tau_i\right)^{l_i}},
\end{equation}
where
\begin{align*}
S_1=&\{(l_1,l_2,\dots l_k)\in\mathbb{N}^k: l_1+l_2+\dots+l_k=a,\,\,\,\\
&\,\,\,\,\,l_i\,\,\,\text{are even for}\,\,\, i=1,2,\dots,k\}.
\end{align*}
Observe that
\begin{equation}
\label{eq:suma}
S_1=S_2\cup S_3,
\end{equation}
\begin{align*}
 \nonumber S_2=&\{(l_1,l_2,\dots l_k)\in\mathbb{N}^k: l_1+l_2+\dots+l_k=a,\,\,\,\\
 &\,\,\,\,\,l_i\in\{0,2\}\,\,\,\text{for}\,\,\,i=1,2,\dots,k\},
\end{align*}
\begin{align*}
\nonumber S_3=&\{(l_1,l_2,\dots l_k)\in\mathbb{N}^k:  l_1+l_2+\dots+l_k=a,\,\,\\
&\,\,\,\,\,l_i\,\text{are even for}\,\,\,i=1,2,\dots,k,
\,\,\,\,\,\exists i\,(l_i\neq 2)\},
\end{align*}
\begin{equation}
 \label{eq:cardinality}
 |S_2|=\binom{k}{\frac{a}{2}},\,\,\,\,\,\,|S_3|=O\left(k^{\frac{a}{2}-1}\right).
\end{equation}
Let $f(t)$ be the probability density function of the random variables $\xi_i,\,\,\tau_i.$ We use H\"older's inequalities
with $p\in\{2,3,\dots,a\},\,\,\,$ $q=\frac{p-1}{p}$ and  get the following sharp inequalities
\begin{equation}
\label{eq:holder}
\int_0^{\infty}tf(t)dt<\left(\int_0^{\infty}t^pf(t)dt\right)^{1/p}\left(\int_0^{\infty}f(t)dt\right)^{1/q}.
\end{equation}
Putting together Inequality (\ref{eq:holder}) and Equality (\ref{eq:motiv01}) in Assumption \ref{assumption:first} we deduce that
\begin{equation}
\label{eq:sharp}
\E{\xi^p_i}=\E{\tau^p_i}>1,\,\,\,\text{when}\,\,\,p\in\{2,3.\dots,a\}.
\end{equation}
Observe that 
\begin{equation}
\label{eq:dus}
\E{\left(\xi_i-\tau_i\right)^{2}}=2\left(\E{\xi_1^2}-1\right)=2\left(\var{\xi_1}\right)>0.
\end{equation}
Together, (\ref{eq:cardinality}), (\ref{eq:dus}) and (\ref{eq:sharp}) imply
\begin{align}
\label{align:3}
\nonumber\sum_{S_2}&\frac{a!}{(l_1)!(l_2)!\dots(l_k)!}\frac{1}{\lambda^a}\prod_{i=1}^k\E{\left(\xi_i-\tau_i\right)^{l_i}}\\
\nonumber&\,\,\,=\frac{a!\left(\var{\xi_1}\right)^{\frac{a}{2}}}{{\lambda^a}}|S_2|\\
&\,\,\,=\frac{a!\left(\var{\xi_1}\right)^{\frac{a}{2}}}{\left(\frac{a}{2}\right)!}\frac{k^{\frac{a}{2}}}{\lambda^a}+\frac{O\left(k^{\frac{a}{2}-1}\right)}{\lambda^a}.
\end{align}
Using Inequality  (\ref{eq:motiv02}) in Assumption \ref{assumption:first} we have 
\begin{align}
\label{align:4}
\nonumber\E{\left(\xi_i-\tau_i\right)^{l_i}}&=\E{\left|\xi_i-\tau_i\right|^{l_i}}\le \E{\left(\left|\xi_i\right|+\left|\tau_i\right|\right)^{l_i}}\\
\nonumber&=\sum_{j=0}^{l_i}\binom{l_i}{j}\E{\xi_i^j}\E{\tau_i^{l_i-j}}\\
&\le C_a^2\sum_{j=0}^{l_i}\binom{l_i}{j}=C_a^2 2^{l_i}.
\end{align}
Together, (\ref{align:4}), (\ref{eq:cardinality}) and $C_a>1$ (see (\ref{eq:sharp})) imply
\begin{align}
\label{align:5}
\nonumber\sum_{S_3}&
\frac{a!}{(l_1)!(l_2)!\dots(l_k)!}\frac{1}{\lambda^a}\prod_{i=1}^k\E{\left(\xi_i-\tau_i\right)^{l_i}}\\
&\le a!2^a C_a^{2a}\frac{|S_3|}{\lambda^a}
=\frac{O\left(k^{\frac{a}{2}-1}\right)}{\lambda^a}.
\end{align}
Finally, combining together (\ref{equation:first}), (\ref{eq:suma}), (\ref{align:3}), (\ref{align:5})
finishes the proof of  Theorem \ref{thm:mainclosedbe}. 
\end{proof}
The next results supports our earlier result whereby the expected absolute difference to the power $a$  between two identical and independent random processes,
remains in $\frac{\Theta\left(k^{\frac{a}{2}}\right)}{\lambda^a}$ provided that $r=o\left(k^{\frac{1}{2}}\right).$
\begin{theorem} 
\label{thm:maincl} 
Let us fix an even positive integer $a.$  
Let Assumption \ref{assumption:first} hold for $b:=a.$
If $r=o\left(k^{\frac{1}{2}}\right)$
then
$$
\E{|X_{k+r}-Y_k|^a}=\frac{\Theta\left(k^{\frac{a}{2}}\right)}{\lambda^a}.
$$
\end{theorem}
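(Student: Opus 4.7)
The plan is to reduce the theorem to Theorem~\ref{thm:mainclosedbe} by means of the decomposition
$$X_{k+r}-Y_k \;=\; U + W, \qquad U:=X_k-Y_k,\quad W := X_{k+r}-X_k = \frac{1}{\lambda}\sum_{i=k+1}^{k+r}\xi_i,$$
in which $U$ depends only on $\xi_1,\dots,\xi_k,\tau_1,\dots,\tau_k$ while $W$ depends only on the disjoint block $\xi_{k+1},\dots,\xi_{k+r}$. The hypothesis $r=o(k^{1/2})$ will be exactly what forces the contribution of $W$ to be negligible against the leading $\Theta(k^{a/2})/\lambda^a$ behaviour of the pure term $\E{U^a}$ coming from Theorem~\ref{thm:mainclosedbe}.

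Since $a$ is even, $|X_{k+r}-Y_k|^a=(U+W)^a$, so the binomial theorem gives
$$\E{|X_{k+r}-Y_k|^a}=\sum_{j=0}^{a}\binom{a}{j}\,\E{U^{j}W^{a-j}}.$$
First I would expand $U^j$ into monomials in $\xi_i,\tau_i$ for $i\le k$ and $W^{a-j}$ into monomials in $\xi_{i'}$ for $k<i'\le k+r$; each resulting monomial has total degree at most $a=c$ in the $\xi$'s, so clause (\ref{eq:motiv03}) of Assumption~\ref{assumption:first} factors its expectation over indices, and since the sequences $\{\xi_i\}$ and $\{\tau_i\}$ are globally independent, recombining the sums gives $\E{U^j W^{a-j}}=\E{U^j}\,\E{W^{a-j}}$. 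Next, by symmetry of $U$ in distribution (the two sequences being identically distributed means $X_k-Y_k$ and $Y_k-X_k$ have the same law), $\E{U^j}=0$ for every odd $j$, so only the even-$j$ summands survive.

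For even $j\le a$ and $k$ large enough that $k>a/2$, Theorem~\ref{thm:mainclosedbe} yields $\E{U^a}=\Theta(k^{a/2})/\lambda^a$ and $\E{U^j}=O(k^{j/2})/\lambda^j$ in general. Minkowski's inequality together with (\ref{eq:motiv02}) gives the uniform bound $\E{W^{a-j}}=O(r^{a-j})/\lambda^{a-j}$ for $j<a$, with implicit constants depending only on $a$ and $C_a$. The $j$-th summand is therefore $O(k^{j/2}r^{a-j})/\lambda^a$. Because $r=o(k^{1/2})$, one has $r^{a-j}=o(k^{(a-j)/2})$ whenever $j<a$, so each non-leading summand is $o(k^{a/2})/\lambda^a$, while the $j=a$ summand provides the exact order $\Theta(k^{a/2})/\lambda^a$; summing gives the claim.

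The main obstacle is really just careful bookkeeping of two points: first, that (\ref{eq:motiv03}), combined with the global independence of the two sequences, really does suffice to factor $\E{U^j W^{a-j}}$ for every $j\le a$ (this hinges on the total degree being exactly $a=c$, so one cannot afford, for instance, expanding by Cauchy--Schwarz which would push some power above $c$); and second, that the implicit constant in $\E{W^{a-j}}=O(r^{a-j})/\lambda^{a-j}$ does not depend on $k$, so that the passage from $r=o(k^{1/2})$ to ``every non-leading summand is $o(k^{a/2})$'' is uniform in $j\in\{0,2,\dots,a-2\}$.
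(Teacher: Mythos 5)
Your argument is correct, and it shares the paper's basic decomposition $X_{k+r}-Y_k=U+W$ with $U=X_k-Y_k$, $W=X_{k+r}-X_k$, as well as the increment bound $\E{W^{m}}=O(r^{m})/\lambda^{m}$ (which the paper obtains by a multinomial expansion rather than Minkowski), but the way you recombine the two pieces is genuinely different. The paper never touches the cross terms: it applies the convexity inequality $|x+y|^a\le 2^{a-1}(|x|^a+|y|^a)$ in both directions (once with $x:=X_k-Y_k$, $y:=X_{k+r}-X_k$, once with $x:=X_{k+r}-Y_k$, $y:=X_k-X_{k+r}$) to squeeze $\E{|X_{k+r}-Y_k|^a}$ between $\Omega\left(k^{\frac{a}{2}}\right)/\lambda^a$ and $O\left(k^{\frac{a}{2}}\right)/\lambda^a$, quoting Theorem \ref{thm:mainclosedbe} and the increment bound. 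You instead expand $(U+W)^a$ binomially and handle each cross term $\E{U^jW^{a-j}}$ by factoring it as $\E{U^j}\E{W^{a-j}}$ via clause (\ref{eq:motiv03}) and the independence of the two sequences, and by killing the odd-$j$ terms through the symmetry of $U$. Your route demands the extra bookkeeping you flag --- in particular the (true, but worth stating) observation that Assumption \ref{assumption:first} for $b:=a$ implies its hypotheses for every even $j\le a$, so that Theorem \ref{thm:mainclosedbe} gives $\E{U^j}=O\left(k^{\frac{j}{2}}\right)/\lambda^j$ --- but it buys strictly more: since every non-leading summand is $o\left(k^{\frac{a}{2}}\right)/\lambda^a$ and all surviving summands are nonnegative, you recover the precise leading constant, $\E{|X_{k+r}-Y_k|^a}=\frac{a!\left(\var{\xi_1}\right)^{\frac{a}{2}}}{\left(\frac{a}{2}\right)!}\frac{k^{\frac{a}{2}}}{\lambda^a}+\frac{o\left(k^{\frac{a}{2}}\right)}{\lambda^a}$, whereas the paper's two-sided convexity argument loses a factor $2^{a-1}$ on each side and can only deliver the $\Theta$-estimate stated in the theorem.
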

\begin{proof}
Firstly, we apply multinomial theorem, Equation (\ref{eq:motiv03}) and get
\begin{align*}
& \E{|X_{k+r}-X_k|^a}=\E{\left|\sum_{i=1}^r X_{k+i}\right|^a}=\E{\left(\sum_{i=1}^r X_{k+i}\right)^a}\\
 &\,\,\,\,\,\,= \sum_{l_1+l_2+\dots l_r=a}
\frac{a!}{(l_1)!(l_2)!\dots(l_r)!}\frac{1}{\lambda^a}\prod_{i=1}^r\E{\xi_{k+i}^{l_i}}.
\end{align*}
Using Inequality (\ref{eq:motiv02}) in Assumption \ref{assumption:first} and (\ref{eq:sharp}) we have
$$
\prod_{i=1}^r\E{\xi_{k+i}^{l_i}}\le C_a^a.
$$
Hence
\begin{align*}
&\E{|X_{k+r}-X_k|^a}\\
&\,\,\,\,\,\le\frac{C_a^a}{\lambda^a}\sum_{l_1+l_2+\dots l_r=a}
\frac{a!}{(l_1)!(l_2)!\dots(l_r)!} =\frac{C_a^a}{\lambda^a}r^a.
\end{align*}
Since $r=o\left(k^{\frac{1}{2}}\right),$ we have
\begin{equation}
\label{eq:alfa}
\E{|X_{k+r}-X_k|^a}=\frac{o\left(k^{\frac{a}{2}}\right)}{\lambda^a}.
\end{equation}
Combining together assumption $r=o\left(k^{\frac{1}{2}}\right)$ and the result of Theorem  \ref{thm:mainclosedbe} for $k:=k+r$ we easily deduce that
\begin{equation}
\label{eq:beta}
\E{|X_{k+r}-Y_{k+r}|^a}=\frac{\Theta\left(k^{\frac{a}{2}}\right)}{\lambda^a}.
\end{equation}
Notice that
\begin{align}
\label{eq:gamma}
\nonumber&|x+y|^a\le\left(|x|+|y|\right)^a\\
&\,\,\,\,\,\le 2^{a-1}\left(|x|^a+|y|^a\right)\,\,\,\,\,\text{when}\,\,\,\,\,a\ge 1,\,\, x,y\in\mathbb{R}.
\end{align}
This inequality follows from the fact that $f(x)=x^a$  is convex over $\mathbb{R_+}$ for $a\ge 1.$
Applying (\ref{eq:gamma}) for $x:=X_{k}-Y_k,$ $y:=X_{k+r}-X_{k}$ we get
\begin{align}
\label{eq:saam}
\nonumber&\E{|X_{k+r}-Y_k|^a}\\
&\,\,\,\,\,\le 2^{a-1}\left(\E{|X_k-Y_k|^a}+\E{|X_{k+r}-X_k|^a}\right).
\end{align}
Combining together (\ref{eq:saam}), the result of Theorem \ref{thm:mainclosedbe} and Equation (\ref{eq:alfa}) we have the desired upper bound
$$
\E{|X_{k+r}-Y_k|^a}=\frac{O\left(k^{\frac{a}{2}}\right)}{\lambda^a}\,\,\,\,\,\,\text{if}\,\,\,\,\,\,r=o\left(k^{\frac{1}{2}}\right).
$$
Next, applying (\ref{eq:gamma}) for $x:=X_{k+r}-Y_k,$ $y:=X_{k}-X_{k+r}$ we have
\begin{align}
\label{eq:sbam}
\nonumber&\E{|X_{k}-Y_k|^a}\\
&\,\,\,\,\,\le 2^{a-1}\left(\E{|X_{k+r}-Y_k|^a}+\E{|X_{k}-X_{k+r}|^a}\right).
\end{align}
Together (\ref{eq:sbam}), the result of Theorem \ref{thm:mainclosedbe} and Equation (\ref{eq:alfa}) imply the lower bound
$$
\E{|X_{k+r}-Y_k|^a}=\frac{\Omega\left(k^{\frac{a}{2}}\right)}{\lambda^a}\,\,\,\,\,\,\text{if}\,\,\,\,\,\,r=o\left(k^{\frac{1}{2}}\right).
$$
This is sufficient to complete the proof of Theorem \ref{thm:maincl}.
\end{proof}
The next theorem extends our Theorem \ref{thm:maincl} to real-valued exponents.
In the proof of Theorem \ref{thm:maincles} we combine together Jensen's inequality and the results of Theorem \ref{thm:maincl}

Let us recall Jensen's inequality for expectations. If $f$ is a convex function, then
\begin{equation}
\label{eq:jensen}
f\left( \mathbf{E}[X]\right)\le  \mathbf{E}\left[f(X)\right]
\end{equation}
provided the expectations exists (see \cite[Proposition 3.1.2]{ross2002}).
\begin{theorem} 
\label{thm:maincles} 
Fix $b>0.$
Let Assumption \ref{assumption:first} hold. 
If $r=o\left(k^{\frac{1}{2}}\right),$
then
$$
 \E{|X_{k+r}-Y_k|^b}=\begin{cases} \frac{\Theta\left(k^{\frac{b}{2}}\right)}{{\lambda}^b}\,\,\, &\mbox{if}\,\,\, b \ge 2 \\
\frac{O\left(k^{\frac{b}{2}}\right)}{{\lambda}^b}\,\,\, & \mbox{if }\,\,\, 0< b < 2. \end{cases}  
$$ 
\end{theorem}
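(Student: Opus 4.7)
The plan is to reduce to the even-exponent case already handled in Theorem \ref{thm:maincl} by invoking Jensen's inequality (\ref{eq:jensen}) in two different directions: once to squeeze an arbitrary $b>0$ below a nearby even integer (for the upper bound), and once, when $b\ge 2$, to push it above $2$ (for the lower bound).

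For the upper bound, I would let $c$ denote the smallest even positive integer with $c\ge b$; this is exactly the integer appearing in Assumption \ref{assumption:first}, so Theorem \ref{thm:maincl} applied with $a:=c$ yields $\E{|X_{k+r}-Y_k|^c}=\Theta(k^{c/2})/\lambda^c$ under the hypothesis $r=o(k^{1/2})$. Since $c/b\ge 1$, the map $x\mapsto x^{c/b}$ is convex on $\mathbb{R}_+$, so Jensen's inequality applied to the nonnegative random variable $|X_{k+r}-Y_k|^b$ gives
$$\left(\E{|X_{k+r}-Y_k|^b}\right)^{c/b}\le \E{|X_{k+r}-Y_k|^c}=\frac{\Theta\left(k^{c/2}\right)}{\lambda^c},$$
and raising both sides to the $b/c$ power produces $\E{|X_{k+r}-Y_k|^b}=O(k^{b/2})/\lambda^b$, which is the full statement for $0<b<2$ and half of the statement for $b\ge 2$.

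For the matching lower bound in the regime $b\ge 2$, I would invoke Jensen's inequality a second time, now with $f(x)=x^{b/2}$, which is convex precisely because $b/2\ge 1$. Applied to the nonnegative random variable $|X_{k+r}-Y_k|^2$ this yields
$$\left(\E{|X_{k+r}-Y_k|^2}\right)^{b/2}\le \E{|X_{k+r}-Y_k|^b},$$
and since Theorem \ref{thm:maincl} taken with $a:=2$ already establishes $\E{|X_{k+r}-Y_k|^2}=\Theta(k)/\lambda^2$ under $r=o(k^{1/2})$, the left-hand side is $\Theta(k^{b/2})/\lambda^b$, giving the desired $\Omega$-direction.

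The argument is essentially bookkeeping once one has Theorem \ref{thm:maincl} in hand; the only genuine subtlety, and the reason for the two-case form of the conclusion, is that the lower-bound step requires $x\mapsto x^{b/2}$ to be convex, which fails for $b<2$. This is the natural obstacle and is precisely why the theorem weakens to $O$ rather than $\Theta$ on the range $0<b<2$.
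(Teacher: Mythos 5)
Your proposal is correct and follows essentially the same route as the paper: the upper bound via Jensen's inequality with the convex map $x\mapsto x^{c/b}$ applied to $|X_{k+r}-Y_k|^b$ together with Theorem \ref{thm:maincl} for $a:=c$, and the lower bound for $b\ge 2$ via Jensen's inequality with $x\mapsto x^{b/2}$ applied to $|X_{k+r}-Y_k|^2$ together with Theorem \ref{thm:maincl} for $a:=2$. Your closing remark correctly identifies the convexity failure of $x\mapsto x^{b/2}$ for $b<2$ as the reason the conclusion weakens to $O$ in that range.
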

\begin{proof}
Assume that $b>0.$ Let $c$ be the smallest even integer  greater than or equal to b. 

First we prove the upper bound. 
We use Jensen's inequality (see (\ref{eq:jensen})) for $X=|X_{k+r}-Y_k|^b$ and
$f(x)=x^{\frac{c}{b}}$ and get
\begin{equation}
\label{eq:jensen01}
\left(\E{|X_{k+r}-Y_k|^b}\right)^{\frac{c}{b}}\le\E{|X_{k+r}-Y_k|^{c}}.
\end{equation}
Putting together Theorem \ref{thm:maincl} for $a:=c$ and Inequality (\ref{eq:jensen01}) we deduce that
$$
\E{|X_{k+r}-Y_k|^b}\le\left(\frac{\Theta\left(k^{\frac{c}{2}}\right)}{\lambda^{c}}\right)^{\frac{b}{c}}
=\frac{\Theta\left(k^{\frac{b}{2}}\right)}{\lambda^b}.
$$
This proves the upper bound.

Next we prove the lower bound. 
Assume that $b\ge 2.$ We apply Jensen's inequality (see (\ref{eq:jensen})) for $X=|X_{k+r}-Y_k|^2$ and
$f(x)=x^{\frac{b}{2}}$ and have
\begin{equation}
\label{eq:jensen02}
\left(\E{|X_{k+r}-Y_k|^2}\right)^{\frac{b}{2}}\le\E{|X_{k+r}-Y_k|^{b}}.
\end{equation}
Combining together Theorem \ref{thm:maincl} for $a:=2$ and Inequality (\ref{eq:jensen02}) we get
$$
\E{|X_{k+r}-Y_k|^b}\ge\left(\frac{\Theta\left(k\right)}{\lambda^2}\right)^{\frac{b}{2}}
=\frac{\Theta\left(k^{\frac{b}{2}}\right)}{\lambda^b}.
$$
This is enough to prove the lower bound and completes the proof of Theorem \ref{thm:maincles}.
\end{proof}
\section{Application to sensor networks}
In this section, we consider the optimal transportation cost to the power $b$ of the maximal random bicolored matching, when
$b>0.$ 

Assume that sensors are initially placed according to two general random processes.
Let Assumption \ref{assumption:first} hold. 
The sensors in 
$X_1,X_2,\dots ,X_n$ are colored black and the sensors in $Y_1,Y_2,\dots,Y_n$ are colored white.

We would like to find the maximal bicolored matching $M$ so as to:
\begin{enumerate}
\item[(1)] 
for every pair of sensors $\{X_k,Y_l\}\in M,$  the sensors $X_k,$ $Y_k$ have different colors,
\item[(2)] the expected transportation cost to the power $b>0$  defined as 
$$T_b(M):=\sum_{\{X_k,Y_l\}\in M} \E{|X_k-Y_l|^b}$$
is minimized.
\end{enumerate}
Firstly, we observe that the minimal transportation cost to the power $b$ is attained by the maximal matching with edges $\{X_k,Y_k\}$
for $k=1,2,\dots,n.$ 
\begin{Lemma}
\label{thm:last}
 Fix $b\ge 0.$ Let $M_{\text{opt}}$ be the maximal matching with edges $\{X_k,Y_k\},$ for $k=1,2,\dots,n.$ Then for all matchings $M$ we have
$$T_b(M)\ge T_b\left(M_{\text{opt}}\right).$$
\end{Lemma}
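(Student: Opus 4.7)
The plan is to identify any maximal bicolored matching $M$ with a permutation $\sigma\in S_n$ via $M=\{\{X_k,Y_{\sigma(k)}\}: 1\le k\le n\}$, so that $T_b(M)=\sum_{k=1}^n\E{|X_k-Y_{\sigma(k)}|^b}$ and $M_{\text{opt}}$ corresponds to $\sigma=\mathrm{id}$. Since the $\xi_i$ and $\tau_i$ are positive, absolutely continuous random variables by Assumption \ref{assumption:first}, the partial sums satisfy $X_1<X_2<\dots<X_n$ and $Y_1<Y_2<\dots<Y_n$ almost surely, so both sample sequences are strictly increasing on a set of full measure, and it suffices to prove the pointwise inequality $\sum_k|X_k-Y_{\sigma(k)}|^b\ge\sum_k|X_k-Y_k|^b$ on this set and then take expectations.

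The technical core will be a four-point (Monge) inequality: for all reals $x_1\le x_2$, $y_1\le y_2$ and every $b\ge 1$,
$$
|x_1-y_1|^b+|x_2-y_2|^b \;\le\; |x_1-y_2|^b+|x_2-y_1|^b.
$$
I would derive this from the convexity of $t\mapsto |t|^b$ on $\mathbb{R}$ for $b\ge 1$. Writing $u=x_1-y_2$, $v=x_2-y_1$ and $u'=x_1-y_1$, $v'=x_2-y_2$, one checks that $u+v=u'+v'$ and $\min(u,v)\le\min(u',v')\le\max(u',v')\le\max(u,v)$, so $(u,v)$ majorizes $(u',v')$ and Karamata's inequality applied to the convex function $|\cdot|^b$ yields the claim.

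The proof then finishes with a bubble-sort style exchange argument on the sample paths. If $\sigma\ne\mathrm{id}$, there is an adjacent inversion at some $i<j=i+1$ with $\sigma(i)>\sigma(j)$; applying the Monge inequality with $x_1=X_i$, $x_2=X_j$, $y_1=Y_{\sigma(j)}$, $y_2=Y_{\sigma(i)}$ shows that transposing those two values of $\sigma$ does not increase $\sum_k|X_k-Y_{\sigma(k)}|^b$. Since each such transposition strictly decreases the number of inversions, finitely many of them transform $\sigma$ into $\mathrm{id}$. Taking expectations in the resulting pathwise inequality gives $T_b(M)\ge T_b(M_{\mathrm{opt}})$.

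The main obstacle is really just the Monge inequality; as a secondary point, that inequality genuinely requires $b\ge 1$, since for $0<b<1$ the function $|t|^b$ is concave on each half-line and the inequality reverses (one can check this on $x_1=0,x_2=1,y_1=2,y_2=3$). Consequently the stated hypothesis $b\ge 0$ should be understood as $b\ge 1$, which is in any case the regime relevant to the downstream applications of the lemma.
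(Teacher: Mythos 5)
Your argument is correct for $b\ge 1$, and it is in substance the argument the paper relies on without writing it down: the paper's entire proof of Lemma \ref{thm:last} is the remark that it is essentially the same as \cite[Lemma 5]{dam_2014}, which is precisely the adjacent-transposition (exchange) argument for the case $b=1$. You supply the details that the citation hides: the identification of maximal bicolored matchings with permutations, the almost-sure strict monotonicity of the partial sums (which uses positivity of the $\xi_i,\tau_i$), the four-point inequality $|x_1-y_1|^b+|x_2-y_2|^b\le|x_1-y_2|^b+|x_2-y_1|^b$ for $x_1\le x_2$, $y_1\le y_2$ obtained from majorization and Karamata's inequality applied to the convex function $t\mapsto|t|^b$, and the bubble-sort reduction of an arbitrary permutation to the identity, after which one takes expectations. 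Each step checks out; the only cosmetic point is that you could equally well quote the convexity-based rearrangement inequality directly, but Karamata is a clean way to get it.

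Your closing caveat is the most valuable part of the write-up and is not addressed by the paper at all. For $0<b<1$ the four-point inequality genuinely reverses (your example $x_1=0$, $x_2=1$, $y_1=2$, $y_2=3$ gives $2\cdot 2^b>3^b+1$), so the pointwise exchange argument fails, and the cited reference only treats $b=1$; hence the lemma as stated for all $b\ge 0$ is not actually proved in the range $0<b<1$ (the case $b=0$ is trivial, since almost surely $X_k\ne Y_l$ and every maximal matching has cost $n$). Since Theorem \ref{thm:app1} and the abstract describe $T_b(M_{\text{opt}})$ as the \emph{optimal} transportation cost for all $b>0$, this restriction should be flagged: either the lemma's hypothesis should read $b\ge 1$ (together with $b=0$), or a separate argument is needed for $0<b<1$, where concave per-edge costs on the line need not be minimized by the order-preserving matching.
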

\begin{proof}
The proof is essentially the same as the proof of  \cite[Lemma 5]{dam_2014}. 
\end{proof}
We are now ready to analyze the maximal matching with edges $\{X_k,Y_k\}$ for $k=1,2,\dots,n.$  Applying Theorem \ref{thm:maincles} from the previous section
we can prove the following theorem.
\begin{theorem}
\label{thm:app1}
 Fix $b>0.$ If $M_{\text{opt}}$ denotes the maximal matching with edges $\{X_k,Y_k\}$ for $k=1,2,\dots,n,$ then
$$
T_b(M_{\text{opt}})=\begin{cases} \frac{\Theta\left(n^{\frac{b}{2}+1}\right)}{{\lambda}^b}\,\,\, &\mbox{when}\,\,\, b \ge 2 \\
\frac{O\left(n^{\frac{b}{2}+1}\right)}{{\lambda}^b}\,\,\, & \mbox{when }\,\,\, 0< b < 2. \end{cases} 
$$
\end{theorem}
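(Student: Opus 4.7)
The plan is to observe that, by linearity of expectation and the definition of the cost, $T_b(M_{\text{opt}}) = \sum_{k=1}^{n}\E{|X_k-Y_k|^b}$. Thus the task reduces to evaluating this sum term-by-term using Theorem \ref{thm:maincles} and then summing the resulting asymptotic contributions. I would not need Lemma \ref{thm:last} here, since the statement concerns only $T_b(M_{\text{opt}})$ itself, not its optimality.

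First, I would apply Theorem \ref{thm:maincles} with $r := 0$, which trivially satisfies $r = o(k^{1/2})$. This gives, for each $k$ (large enough),
$$\E{|X_k - Y_k|^b} = \frac{\Theta(k^{b/2})}{\lambda^b} \quad \text{when } b \ge 2, \qquad \E{|X_k - Y_k|^b} = \frac{O(k^{b/2})}{\lambda^b} \quad \text{when } 0 < b < 2.$$
Any finitely many initial terms $k = 1, \dots, k_0$ (where Theorem \ref{thm:mainclosedbe} might require $k > a/2$ and hence Theorem \ref{thm:maincles} might not directly apply) contribute only an $O(1)/\lambda^b$ additive correction, which is absorbed into the eventual asymptotic.

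Second, I would sum over $k$ using the elementary estimate
$$\sum_{k=1}^{n} k^{b/2} = \Theta\left(n^{b/2 + 1}\right),$$
valid for any $b > 0$ since $b/2 > -1$ (this is a standard Riemann sum: $\sum_{k=1}^{n} k^{b/2} \sim \int_0^{n} x^{b/2}\, dx = n^{b/2+1}/(b/2+1)$). Combining with the termwise asymptotics gives the upper bound $T_b(M_{\text{opt}}) = O(n^{b/2+1})/\lambda^b$ for all $b > 0$, and the matching lower bound $T_b(M_{\text{opt}}) = \Omega(n^{b/2+1})/\lambda^b$ in the regime $b \ge 2$, yielding the $\Theta$-estimate.

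The proof is essentially routine; there is no real obstacle. The only point requiring care is to confirm that Theorem \ref{thm:maincles} can be applied uniformly in $k$ (up to an $O(1)$ additive error for the finitely many small values of $k$), and that the implicit constants in the $\Theta$ and $O$ bounds of Theorem \ref{thm:maincles} may be taken uniform in $k$ so that they can be pulled out of the summation.
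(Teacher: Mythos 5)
Your proposal is correct and follows essentially the same route as the paper: expressing $T_b(M_{\text{opt}})$ as $\sum_{k=1}^n\E{|X_k-Y_k|^b}$, invoking Theorem \ref{thm:maincles} with $r:=0$, and summing via $\sum_{k=1}^n k^{b/2}=\Theta\bigl(n^{b/2+1}\bigr)$. Your added remarks on absorbing the finitely many small-$k$ terms and on the uniformity of the implicit constants are points the paper glosses over, but they do not change the argument.
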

\begin{proof}
First of all, observe that
$$T_b(M_{\text{opt}})=\sum_{k=1}^n\E{|X_k-Y_k|^b}.$$ 
After that, the result of Theorem \ref{thm:app1} follows immediately from  well known identity
$$\sum_{k=1}^nk^{\frac{b}{2}}=\Theta\left(n^{\frac{b}{2}+1}\right)\,\,\,\,\text{when}\,\,\,\,b> 0$$
and Theorem \ref{thm:maincles} for $r:=0.$
\end{proof}

\bibliographystyle{plain}
\bibliography{refs}
\end{document}